%
%
%
%
%
%
%
\documentclass[%
 preprint,
nofootinbib,
 amsmath,amssymb,
 aps,
]{revtex4-2}

\usepackage{graphicx}
\usepackage{dcolumn}
\usepackage{bm}
\usepackage{graphicx}
\usepackage{physics}    
\usepackage{hyperref}   

\usepackage{amsmath}    
\usepackage{amssymb}    
\usepackage{bm}         
\usepackage{physics}    
\usepackage{amsfonts}
\newtheorem{proposition}{Proposition}
\newtheorem{proof}{Proof}
\usepackage{bm}
\usepackage{braket}
\usepackage{float} 

\begin{document}


\title{Quantum Natural Gradient with Geodesic Corrections for Small Shallow Quantum Circuits}

\author{Mourad Halla }

\affiliation{Deutsches Elektronen-Synchrotron DESY, Platanenallee 6, 15738 Zeuthen, Germany}%


\begin{abstract}
The Quantum Natural Gradient (QNG) method enhances optimization in variational quantum algorithms (VQAs) by incorporating geometric insights from the quantum state space through the Fubini-Study metric. In this work, we extend QNG by introducing higher-order integrators and geodesic corrections using the Riemannian Euler update rule and geodesic equations, deriving an updated rule for the Quantum Natural Gradient with Geodesic Correction (QNGGC). We also develop an efficient method for computing the Christoffel symbols necessary for these corrections, leveraging the parameter-shift rule to enable direct measurement from quantum circuits. Through theoretical analysis and practical examples, we demonstrate that QNGGC significantly improves convergence rates over standard QNG, highlighting the benefits of integrating geodesic corrections into quantum optimization processes. Our approach paves the way for more efficient quantum algorithms, leveraging the advantages of geometric methods.
\end{abstract}

\maketitle


\section{Introduction}

Quantum computing represents a significant leap in computational science, enabling the resolution of problems that are fundamentally intractable for classical algorithms by leveraging quantum mechanical principles. Among the leading strategies in near-term quantum computing are Variational Quantum Algorithms (VQAs) \citep{Cerezo2021, McClean2016, Bharti2022}, which are tailored to harness the power of current noisy intermediate-scale quantum (NISQ) devices. A prominent example of VQAs is the Variational Quantum Eigensolver (VQE) \cite{Peruzzo2014}, a hybrid quantum-classical algorithm designed to find the ground state energy of quantum systems a task of paramount importance in areas such as quantum chemistry, materials science and condensed matter physics. VQE operates by iteratively optimizing a parameterized quantum state, known as an ansatz, using a combination of quantum and classical computations. Quantum processors are employed to prepare and measure the quantum state, while classical optimization algorithms adjust the parameters of the ansatz to minimize the cost function. This synergy between quantum evaluations and classical optimization enables VQE to efficiently explore the solution space, making it a powerful tool for solving complex problems such as those found in High-Energy Physics applications \cite{karl}.

Optimization techniques are crucial to the performance of VQAs, as they directly influence convergence rates and the quality of the solutions obtained. While traditional methods like vanilla gradient descent (GD) are commonly used due to their straightforward implementation, the unique challenges of quantum optimization landscapes characterized by non-convexity, noise, and barren plateaus demand more sophisticated techniques to enhance convergence and performance.

The Quantum Natural Gradient (QNG) algorithm \cite{Stokes2020}, a generalization of the natural gradient \cite{Amari1998}, is an advanced optimization technique that enhances the optimization process by incorporating the geometry of the parameter space into the update rules. Unlike standard gradient descent, which assumes a flat parameter space, QNG leverages a Riemannian metric defined by the Fubini–Study metric or, more generally, by the Quantum Fisher Information Matrix (QFIM) \cite{Meyer}, capturing the infinitesimal changes in distances between quantum states and aligning parameter updates with the natural curvature of the quantum state manifold. A key factor in QNG’s superior performance, as shown in \cite{Katabarwa2022}, is its ability to identify regions of high negative curvature early in the optimization process, significantly accelerating convergence. These regions play a crucial role in guiding the optimization along paths that lead to faster descent.

Following the foundational work by \cite{Stokes2020}, several extensions of the QNG have been developed to broaden its applicability and robustness. For example, the QNG was extended to handle noisy and nonunitary circuits, making it more suitable for realistic quantum devices that are subject to imperfections and decoherence \cite{Koczor}. Other works propose using simultaneous perturbation stochastic approximation techniques to approximate the QFIM in QNG \cite{Gacon}. Recently, \cite{Kolotouros} introduced two methods that significantly reduce the resources needed for state preparations required for QNG: the Random Natural Gradient and the Stochastic-Coordinate Quantum Natural Gradient.

The design of the ansatz, which serves as the parameterized quantum circuits in VQAs, plays a pivotal role in the algorithm’s success. The choice of ansatz directly influences the expressiveness and trainability of the quantum circuit, as well as its efficiency in hardware simulations, thus significantly impacting the overall performance of the VQA \cite{Kandala2017, Sim2019Advanced, Tobias}. A well-designed ansatz must strike a balance between complexity and the capacity to accurately represent the target state, making the selection of the ansatz a crucial factor in the effective implementation of VQAs.

In QNG, Riemannian geometry \cite{Lee2013, Wald1984, Frankel2011} provides the framework to understand the geometric structure of the parameter space. The Fubini-Study metric, after Tikhonov-Regularization due to its general ill-definition in VQA, is a Riemannian metric that defines the intrinsic geometry of the quantum state space, guiding the optimization process by aligning parameter updates with the manifold’s curvature. Christoffel symbols describe how directions change as vectors are transported along this curved space, while geodesics represent the shortest and most efficient paths. Accurately computing these geodesics is essential for incorporating higher-order curvature corrections in QNG, enhancing optimization precision and overall performance.

The concept of using geodesic corrections to enhance optimization on manifolds was first introduced in classical optimization \cite{Transtrum2011} and further developed in subsequent work \cite{Transtrum2012}, primarily applied to nonlinear least squares problems under specific curvature assumptions. These methods underscored the crucial influence of manifold geometry in enhancing optimization performance. More recently, geodesic corrections were incorporated into natural gradient optimization within classical machine learning, demonstrating improved convergence by preserving higher-order invariance properties \cite{Song2018}. The formalism presented in Ref. \cite{Song2018} is related to our approach. Building on these foundational ideas, our work integrates geodesic corrections into the Quantum Natural Gradient (QNG), specifically designed for variational quantum algorithms. This approach exploits the unique geometric properties of the quantum state space, enabling optimization that respects and utilizes the manifold's inherent curvature for improved performance.

This manuscript presents a comprehensive approach to incorporating geodesic corrections into the QNG for applications in variational quantum algorithms. In Section~\ref{Differential}, we provide an overview of differential geometry and geodesic equations, establishing the foundational mathematical context. Section~\ref{Circuits} delves into optimizing idealized variational quantum circuits using Quantum Natural Gradient Descent (QNG), emphasizing its effectiveness in quantum optimization. Section~\ref{Higher} introduces higher-order integrators and derives the update rule for the Quantum Natural Gradient with Geodesic Correction (QNGGC). As the update rule relies on the Christoffel symbols of the second kind, Section~\ref{Christoffel} is dedicated to efficiently computing these symbols using the parameter-shift rule, enabling direct measurements from quantum circuits. In Section~\ref{Examples}, we apply the derived update rule to various examples: Examples 1 and 2 involve analytical calculations for numerical simulations, while Example 3 utilizes quantum software, specifically Qiskit \cite{Qiskit}, for practical simulations. Finally, in Section~\ref{Conclusions}, we provide an outlook on potential extensions and future research directions, highlighting the integration of geodesic corrections into quantum optimization frameworks.

\section{Differential Geometry and Geodesic Equations}
\label{Differential}

To establish our notation, we will briefly review essential concepts in differential geometry relevant to our work. For readers interested in further details, we recommend \cite{Wald1984} and \cite{Frankel2011} for introductions tailored to physicists, and \cite{Lee2013} for more comprehensive technical discussions.

A \textit{manifold} is a topological space that locally resembles Euclidean space and supports a consistent coordinate system. More formally, an \(n\)-dimensional manifold \(\mathcal{M}\) is a set equipped with a collection of coordinate charts \(\{(U_i, \varphi_i)\}_{i \in I}\), where \(I\) is an index set and each \(U_i \subset \mathcal{M}\) is an open subset
, and \(\varphi_i: U_i \rightarrow \mathbb{R}^n\) is a homeomorphism, meaning that \(U_i\) is locally similar to \(\mathbb{R}^n\). For \(\mathcal{M}\) to be a \textit{differentiable manifold}, the transition maps between overlapping charts, \(\varphi_j \circ \varphi_i^{-1}: \varphi_i(U_i \cap U_j) \rightarrow \varphi_j(U_i \cap U_j)\), must be smooth (infinitely differentiable). This smooth structure allows us to perform calculus on the manifold.

A \textit{Riemannian manifold} is a differentiable manifold \(\mathcal{M}\) equipped with a \textit{metric tensor} \(g_{ij}\), which is a symmetric, positive-definite tensor field assigning an inner product to each tangent space \(T_p\mathcal{M}\) at a point \(p \in \mathcal{M}\). In local coordinates \(\{x^i\}\), the metric tensor defines the line element:
\begin{equation}
ds^2 = g_{ij} \, dx^i \, dx^j,
\end{equation}
where $g_{ij} = g_{ji}$, and the indices $i,j$ run from $1$ to $n$, with $n$ being the dimensionality of the manifold $\mathcal{M}$. The metric allows for measuring lengths and angles on the manifold. The length of a smooth curve \(\gamma: [a, b] \rightarrow \mathcal{M}\) is given by:
\begin{equation}
L(\gamma) = \int_a^b \sqrt{g_{ij} \, \frac{d\gamma^i}{d\tau} \frac{d\gamma^j}{d\tau}} \, d\tau.
\end{equation}

A \textit{geodesic} on a manifold is a curve whose velocity vector remains parallel to itself along the curve, representing the straightest possible path given the manifold’s geometry. This property is formally expressed by stating that the geodesic has zero covariant acceleration, which accounts for the curvature of the manifold rather than the usual notion of acceleration in Euclidean space. Mathematically, a curve \(\gamma(\tau)\) is a geodesic if it satisfies the geodesic equation:
\begin{equation}
\frac{d^2 x^i}{d\tau^2} + \Gamma^i_{jk} \frac{dx^j}{d\tau} \frac{dx^k}{d\tau} = 0,
\label{geod_equ}
\end{equation}
where \(\tau\) is an affine parameter along the curve, and \(\Gamma^i_{jk}\) are the Christoffel symbols of the second kind, defined by:
\begin{equation}
\Gamma^i_{jk} = \frac{1}{2} g^{il} \left( \partial_j g_{lk} + \partial_k g_{lj} - \partial_l g_{jk} \right).
\label{DefChristoffel}
\end{equation}

The Christoffel symbols define the covariant derivative, which maps tensor fields to other tensor fields, adapting to the curvature of the manifold. Specifically, the covariant derivative of a vector field \(V^j\) is defined as:
\begin{equation}
\nabla_i V^j = \partial_i V^j + \Gamma^j_{ik} V^k,
\end{equation}
This expression generalizes the notion of directional derivatives to curved manifolds and ensures invariance under changes of coordinates.

The \textit{exponential map} is a tool in Riemannian geometry that relates the tangent space at a point to the manifold. Given a tangent vector \(v \in T_p\mathcal{M}\), the exponential map \(\text{Exp}_p
\) maps \(v\) to a point on the manifold reached by traveling along the geodesic starting at \(p\) with initial velocity \(v\). Formally,
\begin{equation}
\text{Exp}_p(v) = \gamma(1),
\end{equation}
where \(\gamma(\tau)\) is the geodesic satisfying the initial conditions \(\gamma(0) = p\) and \(\dot{\gamma}(0) = \left.\frac{d\gamma(\tau)}{d\tau}\right|_{\tau=0} = v\). By rescaling the parameter \(v\) by a factor of \(\epsilon\), the following relation holds:
\begin{equation}
\text{Exp}_p(\epsilon v) = \gamma(\epsilon).
\label{exp_geo}
\end{equation}

To approximate \(\text{Exp}_p(\epsilon v)\) locally, we expand \(\gamma(\epsilon)\) for small \(\epsilon\) using a Taylor series:
\begin{equation}
\text{Exp}_p(\epsilon v) \approx \gamma(0) + \epsilon \dot{\gamma}(0) + \frac{\epsilon^2}{2} \ddot{\gamma}(0) +  \mathcal{O}(\epsilon^3).
\end{equation}

This expansion provides insight into the behavior of the geodesic near the starting point \( p \). The first-order term, \(\gamma(0) + \epsilon \dot{\gamma}(0)\), corresponds to the initial position and velocity, providing a linear approximation of the geodesic. The second-order term, \(\frac{\epsilon^2}{2}\ddot{\gamma}(0)\), accounts for curvature effects, refining the approximation by incorporating the geodesic's acceleration. Since \(\ddot{\gamma}(0)\) is given by evaluating the geodesic equation \eqref{geod_equ} at \(\tau = 0\), it encodes the local curvature effects at the starting point.

In the context of quantum information, particularly for the QNG, the manifold of interest is the parameter space of quantum states, where the metric tensor relevant for us is defined by the Fubini-Study metric. This metric provides a Riemannian structure that captures the infinitesimal distance between quantum states, facilitating more efficient optimization in variational quantum algorithms. In the next sections, we review how to apply the Fubini-Study metric to the Quantum Natural Gradient and how to distinguish geodesic corrections to QNG from the perspective of the exponential map.


\section{Optimizing Idealized Variational Quantum Circuits with Quantum Natural Gradient}\label{Circuits}

This section provides an overview of VQAs and their optimization, focusing on idealized variational quantum circuits. For a more comprehensive understanding, refer to \cite{Cerezo2021, McClean2016, Bharti2022} for reviews of VQAs and \cite{Stokes2020} for details on QNG.

Variational quantum circuits are constructed using a family of parameterized unitary transformations. For an \( n \)-qubit system, the state space is represented by a Hilbert space of dimension \( N = 2^n \), which can be decomposed as a tensor product of two-dimensional spaces: \( \mathbb{C}^N = (\mathbb{C}^2)^{\otimes n} \). The parameterized circuits are typically composed of sequences of unitary transformations:
\begin{equation}
U(\bm{\theta}) = V_l U_l(\theta_l) \cdots V_2 U_2(\theta_2) V_1 U_1(\theta_1),
\end{equation}
where \( V_j \) are fixed unitary operators, and \( U_j(\theta_j) \) are parameterized gates. Here, \(\bm{\theta} = (\theta_1,\dots,\theta_l)\) is the vector of parameters, and the index \( l \) denotes the total number of parameterized gates in the circuit. Each \( U_j(\theta_j) \) is defined as:
\begin{equation}
U_j(\theta_j) = e^{-i \frac{\theta_j}{2} K_j},
\end{equation}
with \( K_j \) being Hermitian operators.

The goal of VQE is to minimize a cost function, typically defined as the expectation value of an observable \( \hat{O} \) with respect to the quantum state \( \ket{\psi(\bm{\theta})} = U(\bm{\theta}) \ket{\psi_0} \):
\begin{align}
\mathcal{L}(\bm{\theta}) &= \bra{\psi(\bm{\theta})} \hat{O} \ket{\psi(\bm{\theta})},
\end{align}
where \( \ket{\psi_0} \) is the initial state. The objective is to find the optimal parameters \( \bm{\theta}^* \) that minimize \( \mathcal{L}(\bm{\theta}) \).

The optimization is typically performed using gradient descent, updating the parameters iteratively:
\begin{align}
\bm{\theta}_{t+1} &= \bm{\theta}_t - \eta \, \boldsymbol{\nabla} \mathcal{L}(\bm{\theta}_t),
\label{gradient_descent}
\end{align}
where \( \boldsymbol{\nabla}:= (\partial_1, \dots, \partial_l) = \left( \frac{\partial}{\partial \theta^1}, \dots, \frac{\partial}{\partial \theta^l} \right) \) denotes the gradient operator with respect to the parameter vector \( \bm{\theta} = (\theta_1, \dots, \theta_l) \).

Equation~\eqref{gradient_descent} can be interpreted as an approximation of the solution to an ordinary differential equation (ODE) using the Euler method:
\begin{align}
\dot{\bm{\theta}} &= - \lambda \, \boldsymbol{\nabla} \mathcal{L}(\bm{\theta}),
\end{align}
where \( \eta = \epsilon \lambda \) is the learning rate, with \( \lambda \) being a time scale constant that affects the speed but not the trajectory of the system, and \( \epsilon \) is the step size. 

However, this ODE is not invariant under reparameterizations of the parameters \( \bm{\theta} \). For instance, if we rescale the parameters \( \bm{\theta} \rightarrow 2\bm{\theta} \), the gradient \( \boldsymbol{\nabla} \mathcal{L} \) would scale as \( \frac{1}{2} \boldsymbol{\nabla} \mathcal{L} \), leading to inconsistencies in the optimization process.

The core of this issue lies in the differential geometric nature of the gradient. The parameter update \( \dot{\bm{\theta}} \) transforms as a vector in the tangent space \( T_{\bm{\theta}} \mathcal{M} \), while the gradient \( \boldsymbol{\nabla} \mathcal{L} \) is a covector (or 1-form) in the cotangent space \( T^*_{\bm{\theta}} \mathcal{M} \). Since the ODE in Eq. \eqref{gradient_descent} attempts to relate objects in different spaces with distinct transformation rules, it is not an invariant relation.

QNG alleviates this issue by approximately solving an invariant ODE. The key idea is to apply the inverse metric \( \mathbf{g}^{-1} \) to the gradient vector, yielding the natural gradient \( \mathbf{g}^{-1} \boldsymbol{\nabla} \mathcal{L}(\boldsymbol{\theta}) \). The resulting ODE becomes:
\begin{equation}
\dot{\boldsymbol{\theta}} = -\lambda\, \mathbf{g}^{-1} \boldsymbol{\nabla} \mathcal{L}(\boldsymbol{\theta}),
\label{QNG_ODE}
\end{equation}
which is now a vector in \( T_{\bm{\theta}} \mathcal{M} \), thereby resolving the type mismatch in Eq. \eqref{gradient_descent}.

This new ODE is invariant under reparameterizations, ensuring that the forward Euler approximation:
\begin{align}
\bm{\theta}_{t+1} &= \bm{\theta}_t - \eta \mathbf{g}^{-1} \boldsymbol{\nabla} \mathcal{L}(\bm{\theta}_t),
\end{align}
remains consistent across different parameter spaces. The metric tensor \( g_{ij} \) is recognized as the Fubini-Study metric, derived from the real part of the Quantum Geometric Tensor \cite{Provost1980}:
\begin{align}
g_{i j} &= \text{Re} \left( \left\langle \partial_i \psi | \partial_j \psi \right\rangle \right) - \left\langle \partial_i \psi | \psi \right\rangle \left\langle \psi | \partial_j \psi \right\rangle.
\label{Fubini}
\end{align}

It is important to note that the metric tensor in VQAs does not always define a legitimate Riemannian metric, as it is often degenerate, meaning it may not be invertible. To address this issue, regularization techniques such as Tikhonov regularization are applied by adding a small constant multiplied by the identity matrix (\(\lambda I\)) to the metric, ensuring it is well-defined and invertible. Additionally, the computation of this metric tensor becomes increasingly intensive as the number of parameters in the circuit grows. Therefore, approximations, such as block-diagonal or diagonal forms, are crucial for practical applications \cite{Stokes2020}.

Thus, QNG utilizes a regularized approximated, often diagonal or block-diagonal, inverse metric tensor to perform parameter updates that respect the intrinsic geometry of the quantum state space, leading to improved convergence speed and accuracy in optimization within VQAs. In the next section, we explore how this approach can be further refined through higher-order integrators and the inclusion of geodesic corrections.
\section{Higher-order Integrators and Geodesic Correction}
\label{Higher}

The forward Euler method, commonly used in the QNG, provides only a first-order approximation to the exact solution of the natural gradient ordinary differential equation (ODE). For higher accuracy, higher-order integrators should be employed.

To further refine this approach, we can use the Riemannian Euler method, which leverages the exponential map to update the parameters, ensuring that the updates align with the geometry of the manifold. Using \eqref{exp_geo}, the Riemannian Euler update rule is given by:
\begin{align}
\boldsymbol{\gamma}_t(\epsilon) &= \boldsymbol{\theta}_{t+1} = \text{Exp}_{\boldsymbol{\theta}_t}\left(-\epsilon \lambda \, \mathbf{g}^{-1}(\boldsymbol{\theta}_t) \, \boldsymbol{\nabla} \mathcal{L}(\boldsymbol{\theta}_t)\right),
\label{riemann_update}
\end{align}
where
\begin{align}
\boldsymbol{\gamma}_t(0) &= \boldsymbol{\theta}_t, \\
\dot{\boldsymbol{\gamma}}_t(0) &= -\lambda \, \mathbf{g}^{-1}(\boldsymbol{\theta}_t) \, \boldsymbol{\nabla} \mathcal{L}(\boldsymbol{\theta}_t).
\end{align}

Here, the exponential map \( \text{Exp} \) translates the current parameter \(\bm{\theta}_t\) along the geodesic defined by the natural gradient. This approach is effective because it preserves invariance properties under reparameterization, owing to the characteristics of the exponential map. However, directly computing the exponential map is challenging, as it requires solving the geodesic equation exactly. To approximate this computation, we explore using the first and second-order derivatives as detailed in Section \ref{Differential}.

The first derivatives approximate the geodesic as:
\begin{equation}
\boldsymbol{\gamma}_t(\epsilon) \approx \boldsymbol{\gamma}_t(0) + \epsilon \dot{\boldsymbol{\gamma}}_t(0) 
\Rightarrow \boldsymbol{\theta}_{t+1} = \boldsymbol{\theta}_t - \eta \, \mathbf{g}^{-1} \boldsymbol{\nabla} \mathcal{L}(\boldsymbol{\theta}_t),
\end{equation}
where the learning rate \(\eta = \epsilon \lambda\). This approximation corresponds to the naive Quantum Natural Gradient update rule, utilizing only the first-order information. For a more precise approximation, we can incorporate second-order information from the geodesic equation, we have:

\begin{equation}
\boldsymbol{\gamma}_t(\epsilon) \approx \boldsymbol{\gamma}_t(0) + \epsilon \dot{\boldsymbol{\gamma}}_t(0) + \frac{1}{2} \epsilon^2 \ddot{\boldsymbol{\gamma}}_t(0).
\end{equation}
Evaluating the geodesic equation \eqref{geod_equ} at $\tau = 0$ gives:

\begin{equation}
\ddot{\boldsymbol{\gamma}}_t(0) = - \, \boldsymbol{\Gamma}(\boldsymbol{\theta}_t)\left[\dot{\boldsymbol{\gamma}}_t(0), \dot{\boldsymbol{\gamma}}_t(0)\right].
\end{equation}
Here, \( \boldsymbol{\Gamma}[\cdot, \cdot] \) denotes the action of the connection (Christoffel symbols) as a bilinear map on tangent vectors; in components, it corresponds to \( \Gamma^i_{lm} \dot{\gamma}^l \dot{\gamma}^m \).

The resulting update rule with the geodesic correction is:

\begin{equation}
\boldsymbol{\theta}_{t+1} = \boldsymbol{\theta}_t + \epsilon \dot{\boldsymbol{\gamma}}_t(0) - \frac{1}{2} \epsilon^2 \, \boldsymbol{\Gamma}(\boldsymbol{\theta}_t)\left[\dot{\boldsymbol{\gamma}}_t(0), \dot{\boldsymbol{\gamma}}_t(0)\right].
\label{geo_update}
\end{equation}

We now combine all the relevant equations and heuristically allow the correction term in equation \eqref{geo_update} to depend on a tunable parameter \( b \) rather than being fixed to \(\eta^2\). In other words, \(\eta^2\) is too small to effectively capture the geodesic correction effect. This approach is justified because we use an approximate Fubini-Study metric, such as a diagonal metric, and fully capturing the curvature may require going beyond a second-order approximation in \eqref{riemann_update}. Such higher-order corrections would be computationally intensive. To avoid this complexity, we adjust the correction term flexibly with the parameter \( b \), allowing us to capture the geodesic correction effects without the need for exact higher-order terms, which are computationally demanding.

This leads to the following update rule, incorporating the geodesic correction into the QNG:

\begin{equation}
\boldsymbol{\theta}_{t+1} = \boldsymbol{\theta}_t 
- \eta \, \mathbf{g}^{-1}(\boldsymbol{\theta}_t) \boldsymbol{\nabla} \mathcal{L}(\boldsymbol{\theta}_t)
- \frac{b}{2} \, \boldsymbol{\Gamma}(\boldsymbol{\theta}_t) 
\left[ \mathbf{g}^{-1} \boldsymbol{\nabla} \mathcal{L}(\boldsymbol{\theta}_t), \, \mathbf{g}^{-1} \boldsymbol{\nabla} \mathcal{L}(\boldsymbol{\theta}_t) \right].
\label{geo_update_b}
\end{equation}
This is equivalent, in components, to the expression:
\begin{equation}
\theta^{\mu}_{t+1} = \theta^{\mu}_t 
- \eta \, g^{ij}(\theta^{\mu}_t) \, \partial_j \mathcal{L}(\theta^{\mu}_t)
- \frac{b}{2} \, \Gamma^i_{lm}(\theta^{\mu}_t) 
\left( g^{lj}(\theta^{\mu}_t) \, \partial_j \mathcal{L}(\theta^{\mu}_t) \right)
\left( g^{mj}(\theta^{\mu}_t) \, \partial_j \mathcal{L}(\theta^{\mu}_t) \right).
\end{equation}
where \( \theta^{\mu} \) denotes the components of the parameter vector \( \boldsymbol{\theta} \) in a local coordinate, and \( \partial_j := \frac{\partial}{\partial \theta^j} \).

The update rule in equation \eqref{geo_update_b} shows that the correction term depends on the Christoffel symbols of the second kind. In the next section, we will discuss how to compute these symbols efficiently.

\section{Computing the Christoffel Symbols with the Parameter-Shift Rule}\label{Christoffel}

The Christoffel symbols are essential components in the update rule \eqref{geo_update_b}, as they capture the curvature of the parameter space defined by the Fubini-Study metric. Traditionally, the Christoffel symbols are computed by differentiating the Fubini-Study metric \eqref{Fubini}, which involves first calculating the metric tensor and then deriving the Christoffel symbols through classical differentiation. This approach relies heavily on classical post-processing and does not fully utilize information available from parameterized quantum states.

To enable direct estimation of the Christoffel symbols, we reformulate their computation using the parameter-shift rule specifically adapted to parameterized quantum circuits. This method allows the Christoffel symbols to be directly estimated from measurement outcomes, bypassing the need for classical differentiation and making the process more efficient and directly linked to quantum state preparation. For a detailed overview of the parameter-shift rule and its application in deriving the Fubini-Study metric, please refer to the appendix and the related work in \cite{Mari}.

The Fubini-Study metric for a pure variational quantum state \( |\psi(\bm{\theta}) \rangle \) can be represented as a second-order tensor:

\begin{equation}
g_{j_1 j_2} (\bm{\theta}) = -\frac{1}{2} \frac{\partial^2}{\partial \theta_{j_1} \partial \theta_{j_2}} \left. \left| \langle \psi (\bm{\theta}') | \psi (\bm{\theta}) \rangle \right|^2 \right|_{\boldsymbol{\theta}' = \boldsymbol{\theta}}
\end{equation}
To evaluate this second derivative, we apply a parameter-shift rule~\cite{Mitarai,Schuld} adapted for fidelity overlaps, following the formulation introduced in~\cite{Mari}. This method uses central finite differences with shifts of \( \pm \pi/2 \), yielding an exact expression for the metric from four circuit evaluations.

\begin{align}
g_{j_1 j_2} (\bm{\theta}) = -\frac{1}{8} &\Bigg[ 
\left| \langle \psi (\bm{\theta}) | \psi (\bm{\theta} + (\mathbf{e}_{j_1} + \mathbf{e}_{j_2}) \pi / 2) \rangle \right|^2 \nonumber \\
&- \left| \langle \psi (\bm{\theta}) | \psi (\bm{\theta} + (\mathbf{e}_{j_1} - \mathbf{e}_{j_2}) \pi / 2) \rangle \right|^2 \nonumber \\
&- \left| \langle \psi (\bm{\theta}) | \psi (\bm{\theta} + (-\mathbf{e}_{j_1} + \mathbf{e}_{j_2}) \pi / 2) \rangle \right|^2 \nonumber \\
&+ \left| \langle \psi (\bm{\theta}) | \psi (\bm{\theta} - (\mathbf{e}_{j_1} + \mathbf{e}_{j_2}) \pi / 2) \rangle \right|^2 
\Bigg].
\label{full_metric}
\end{align}
where \( \mathbf{e}_{j_1} \) and \( \mathbf{e}_{j_2} \) are unit vectors in the parameter space. If the metric tensor is assumed to be diagonal (i.e., \( g_{j_1 j_2} = 0 \) for \( j_1 \neq j_2 \)), the expression simplifies considerably and reduces the number of required evaluations:
\begin{equation}
g_{j j}(\boldsymbol{\theta}) = \frac{1}{4} \left[ 1 - \left| \langle \psi(\boldsymbol{\theta}) | \psi(\boldsymbol{\theta} + \pi e_j) \rangle \right|^2 \right].
\label{diag_metric}
\end{equation}
In this scenario, the Christoffel symbols can be computed directly from the derivatives of the metric tensor using the higher-order parameter-shift rule, enabling their estimation directly from the quantum circuit. For the diagonal case, the results are summarized in the following proposition:
\begin{proposition}
The Christoffel symbols of the second kind for the diagonal metric \eqref{diag_metric} are given by:
\begin{equation}
\begin{aligned}
\Gamma^i_{jk} &= \frac{1}{16g_{ii}(\theta)} \Bigg[   \delta_{ij} 
 \Bigg( 
- \left| \langle \psi(\boldsymbol{\theta}) | \psi(\boldsymbol{\theta} + \pi e_i +\dfrac{\pi}{2} e_k) \rangle \right|^2 
+ \left| \langle \psi(\boldsymbol{\theta}) | \psi(\boldsymbol{\theta} + \pi e_i - \dfrac{\pi}{2} e_k) \rangle \right|^2
\Bigg)\\
&+   \delta_{ik} 
 \Bigg( 
- \left| \langle \psi(\boldsymbol{\theta}) | \psi(\boldsymbol{\theta} + \pi e_i +\dfrac{\pi}{2} e_j) \rangle \right|^2 
+ \left| \langle \psi(\boldsymbol{\theta}) | \psi(\boldsymbol{\theta} + \pi e_i - \dfrac{\pi}{2} e_j) \rangle \right|^2
\Bigg) \\
&-   \delta_{jk} 
 \Bigg( 
- \left| \langle \psi(\boldsymbol{\theta}) | \psi(\boldsymbol{\theta} + \pi e_j +\dfrac{\pi}{2} e_i) \rangle \right|^2 
+ \left| \langle \psi(\boldsymbol{\theta}) | \psi(\boldsymbol{\theta} + \pi e_j - \dfrac{\pi}{2} e_i) \rangle \right|^2
\Bigg) 
\Bigg].
\end{aligned}
\label{Christoffel_parshiff}
\end{equation}
\end{proposition}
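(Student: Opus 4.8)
The plan is to start from the definition of the Christoffel symbols of the second kind, equation \eqref{DefChristoffel}, and specialize it to the diagonal metric \eqref{diag_metric}. Since \( g_{j_1 j_2} \) is assumed diagonal, the inverse metric is simply \( g^{il} = \delta_{il}/g_{ii}(\boldsymbol{\theta}) \) (no sum), so the contraction over \( l \) collapses and we obtain \( \Gamma^i_{jk} = \tfrac{1}{2 g_{ii}} \left( \partial_j g_{ik} + \partial_k g_{ij} - \partial_i g_{jk} \right) \). The three surviving terms on the right-hand side are each of the form \( \partial_a g_{bc} \) where the metric component \( g_{bc} \) is itself diagonal, i.e.\ nonzero only when \( b = c \); this is exactly what produces the three Kronecker deltas \( \delta_{ij} \), \( \delta_{ik} \), \( \delta_{jk} \) in the claimed expression \eqref{Christoffel_parshiff}, with the sign pattern \( (+,+,-) \) inherited directly from \eqref{DefChristoffel}.

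Next I would compute the single-parameter derivative \( \partial_a g_{bb} \) (no sum on \( b \)) of the diagonal metric component using the parameter-shift rule. Starting from \( g_{bb}(\boldsymbol{\theta}) = \tfrac14\left[1 - \left|\langle \psi(\boldsymbol{\theta})|\psi(\boldsymbol{\theta} + \pi e_b)\rangle\right|^2\right] \), I would treat the overlap-squared as a function of \( \theta_a \) and apply the standard two-term parameter-shift rule, \( \partial_a f(\boldsymbol{\theta}) = \tfrac12\left[ f(\boldsymbol{\theta} + \tfrac{\pi}{2} e_a) - f(\boldsymbol{\theta} - \tfrac{\pi}{2} e_a)\right] \), valid because each gate is generated by a Hermitian \( K_a \) with eigenvalues \( \pm 1/2 \) so the fidelity overlap is a trigonometric polynomial of the right period in \( \theta_a \). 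This yields \( \partial_a g_{bb} = -\tfrac18\left[ \left|\langle\psi(\boldsymbol{\theta})|\psi(\boldsymbol{\theta} + \pi e_b + \tfrac{\pi}{2} e_a)\rangle\right|^2 - \left|\langle\psi(\boldsymbol{\theta})|\psi(\boldsymbol{\theta} + \pi e_b - \tfrac{\pi}{2} e_a)\rangle\right|^2\right] \) when \( a \neq b \) (and one must separately note that the \( \boldsymbol{\theta} \)-dependence entering through the bra also contributes, which is why the shift-rule application to the fidelity needs the symmetrized four-point form from \eqref{full_metric}; in the diagonal reduction the bra-side and ket-side contributions combine, and I would verify this bookkeeping carefully). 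Substituting this shift-rule expression for each of \( \partial_j g_{ik} \), \( \partial_k g_{ij} \), \( \partial_i g_{jk} \) into the collapsed formula and pulling out the common prefactor \( \tfrac{1}{2 g_{ii}} \cdot \bigl(-\tfrac18\bigr) = -\tfrac{1}{16 g_{ii}} \) gives precisely \eqref{Christoffel_parshiff}, once we match signs: the overall minus combines with the \( (+,+,-) \) pattern to produce the \( (-,-,+)\)-type inner sign structure displayed, and the reference points all collapse to \( \langle\psi(\boldsymbol{\theta})|\cdot\rangle \) because the metric is evaluated with \( \boldsymbol{\theta}' = \boldsymbol{\theta} \).

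The main obstacle I anticipate is the careful justification of the parameter-shift rule for the \emph{fidelity} \( \left|\langle\psi(\boldsymbol{\theta})|\psi(\boldsymbol{\theta} + \pi e_b)\rangle\right|^2 \) when differentiating with respect to a parameter \( \theta_a \) that appears in both the reference ket \( |\psi(\boldsymbol{\theta})\rangle \) and the shifted ket \( |\psi(\boldsymbol{\theta} + \pi e_b)\rangle \): one cannot naively apply the single-parameter two-term rule because \( \theta_a \) enters twice, so one must either use the chain rule splitting the total derivative into a "bra part" and a "ket part" (each handled by its own shift) or, equivalently, recognize the structure already encoded in the four-point formula \eqref{full_metric} and differentiate that. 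I would handle this by writing \( g_{bb}(\boldsymbol{\theta}) \) via \eqref{full_metric} restricted to \( j_1 = j_2 = b \), then applying the shift in \( \theta_a \) to the explicit \( \boldsymbol{\theta} \)-arguments, and checking that cross terms of the form \( \langle\psi(\boldsymbol{\theta} \pm \tfrac\pi2 e_a)|\psi(\boldsymbol{\theta} + \pi e_b \pm \tfrac\pi2 e_a)\rangle \) reduce, after using \( \boldsymbol{\theta}' = \boldsymbol{\theta} \) and the periodicity/parity of the overlap, to the four reference-overlap terms written in \eqref{Christoffel_parshiff}. The remaining steps — collecting the \( \delta \)'s, fixing signs, and factoring the prefactor — are then routine bookkeeping.
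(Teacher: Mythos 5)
Your proposal follows essentially the same route as the paper's proof: collapse \eqref{DefChristoffel} for a diagonal metric into the Kronecker-delta form $\Gamma^i_{jk} = \tfrac{1}{2g_{ii}}\left(\delta_{ij}\,\partial_k g_{ii} + \delta_{ik}\,\partial_j g_{ii} - \delta_{jk}\,\partial_i g_{jj}\right)$, evaluate each $\partial_a g_{bb}$ by a single $\pm\pi/2$ parameter shift of the fidelity (yielding the same $1/8$ prefactor and hence the $1/16$ overall), and substitute. The bra-side dependence you flag as the main obstacle is not treated in the paper's appendix either --- it simply asserts the ket-only shift formula \eqref{diag_derv} --- so your more cautious bookkeeping on that point is a refinement rather than a deviation, and the rest of your argument matches the paper's step for step.
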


\begin{proof}
See Appendix~\ref{app:Christoffel}.
\end{proof}

The computation of the metric tensor and Christoffel symbols involves evaluating combinations of state overlaps at each step of the update rule algorithm \eqref{geo_update_b}. Specifically, the Christoffel symbols \eqref{Christoffel_parshiff} require six distinct expectation values. For instance, one of these overlaps can be expressed as:
\begin{equation}
|\langle \psi(\bm{\theta}) | \psi(\bm{\theta} + \pi e_i + \frac{\pi}{2} e_k) \rangle|^2 
= |\langle 0 | U^{\dagger}(\bm{\theta}) U(\bm{\theta} + \pi e_i + \frac{\pi}{2} e_k) | 0 \rangle|^2.
\end{equation}

This overlap can be intuitively interpreted as the probability of measuring all qubits of the quantum state \( U(\bm{\theta}) U(\bm{\theta} + \pi e_i + \frac{\pi}{2} e_k) | 0 \rangle \) in the computational basis at \(|0\rangle\). If \( l \) denotes the number of variational parameters in the quantum circuit generating the state \(|\psi(\bm{\theta})\rangle\), then computing the full Fubini–Study metric requires \( O(l^2) \) quantum circuits of depth \( 2d \), and the Christoffel symbols require \( O(l^3) \) quantum circuits of the same depth. Here, \( d \) refers to the depth of a single execution of the variational ansatz, i.e., the number of sequential gate layers. For the diagonal approximation of the metric, the cost reduces to \( O(l) \) quantum circuits, while the Christoffel symbols require \( O(l^2) \) quantum circuits, both still with depth \( 2d \).

\section{Application Examples}\label{Examples}

In this section, we present examples to illustrate the practical application of the previously developed theories.

\subsection{Example 1 : Single Qubit}
In this example, we examine a single qubit scenario. For the VQE, we take the ansatz state defined as:
\begin{equation}
|\phi(\bm{\theta})\rangle = \cos\theta_0|0\rangle + e^{2i\theta_1}\sin\theta_0|1\rangle = \begin{bmatrix}
\cos\theta_0 \\
e^{2i\theta_1}\sin\theta_0
\end{bmatrix}.
\label{exp1:state}
\end{equation}
where \( \boldsymbol{\theta} = (\theta_0, \theta_1) \).

The Hamiltonian for which we want to find the ground state is \( H = \sigma_x \), where the energy is expressed in dimensionless units in this pedagogical VQE example. Thus, the cost function for the system is:
\begin{equation}
f = \langle \phi(\theta)|H|\phi(\theta)\rangle = \sin(2\theta_0)\cos(2\theta_1),
\end{equation}
and its gradient vector is:
\begin{equation}
\frac{\partial f}{\partial \bm{\theta}} = \begin{bmatrix}
2\cos(2\theta_0)\cos(2\theta_1) \\
-2\sin(2\theta_0)\sin(2\theta_1)
\end{bmatrix}.
\end{equation}

The Fubini-Study metric for this single qubit example, using equation \eqref{Fubini}, is given by:
\begin{equation}
F = \begin{bmatrix}
1 & 0 \\
0 & \sin^2(2\theta_0)
\end{bmatrix}.
\end{equation}

The Christoffel symbols of this metric, calculated using Eq.~\eqref{DefChristoffel}, have the following nonzero components:

\begin{equation}
\begin{aligned}
\Gamma^{0}_{\; 1 1} &= - 2 \sin(2 \theta_0) \cos(2 \theta_0),  \\
\Gamma^{1}_{\; 0 1} &=\Gamma^{1}_{\; 1 0} =   2 \dfrac{\cos(2 \theta_0)}{\sin(2 \theta_0)}
\end{aligned}
\end{equation}

\begin{figure}[H]
    \centering
    \includegraphics[width=\textwidth]{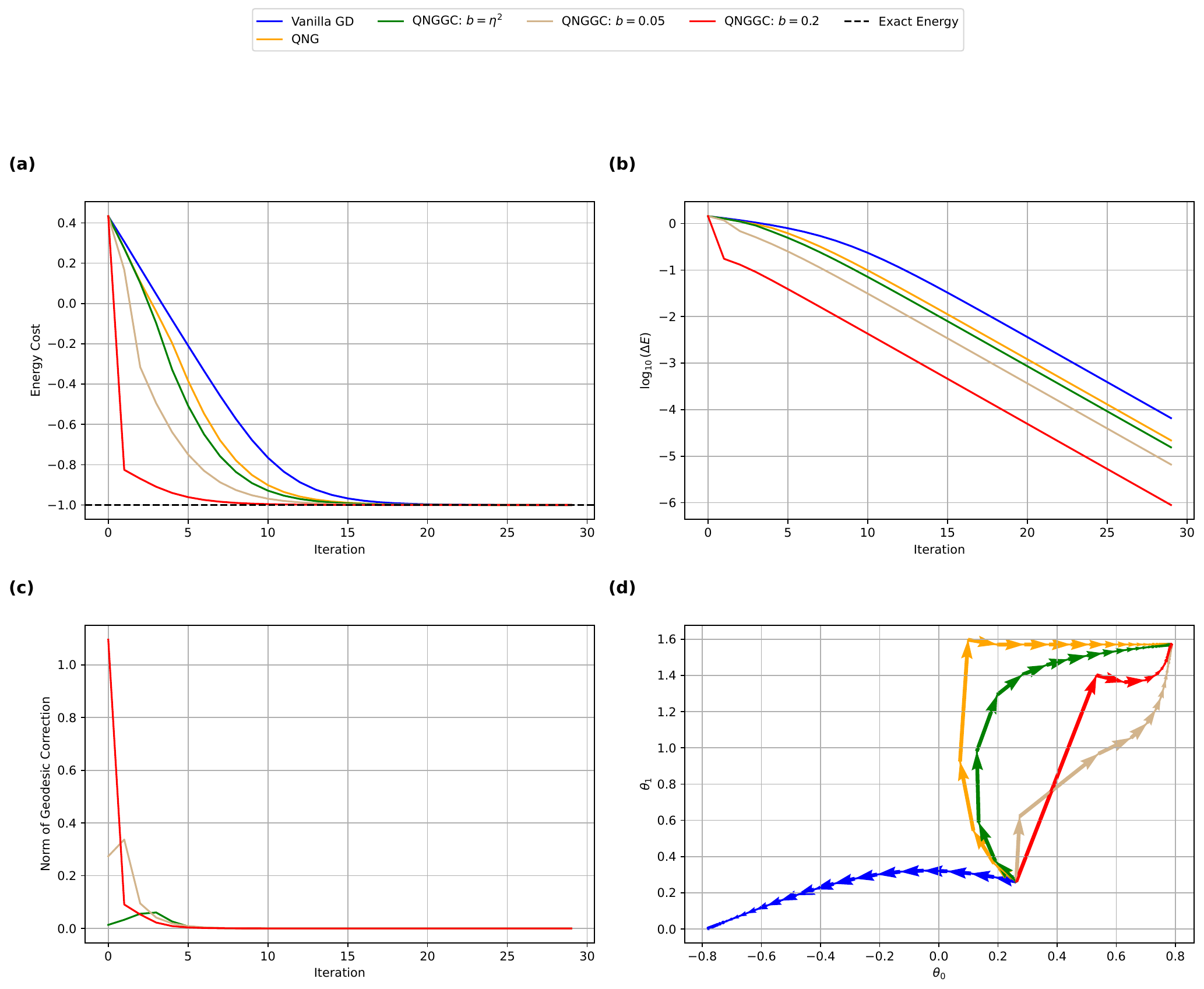}
    \caption{
    Performance of GD, QNG, and QNGGC with varying correction parameters (\( b = \eta^2 \), \( b = 0.05 \), and \( b = 0.2 \)). The initial parameters \((\theta_0, \theta_1)\) are set to \([\pi/12, \pi/12]\), corresponding to a specific initial quantum state defined in Eq.~\eqref{exp1:state}. A learning rate \(\eta = 0.05\) is used for all optimizers. 
    (a) Shows the evolution of the energy cost function over 30 iterations, demonstrating the improved convergence behavior of QNGGC relative to QNG and GD.
    (b) Depicts the logarithmic energy difference, \(\log_{10}(\Delta E)\), where \(\Delta E\) represents the gap between the current energy and the target energy \( E^* = -1 \). The current energy is computed as the expectation value \(\langle \phi(\theta) | H | \phi(\theta) \rangle\) at each step. 
    (c) Displays the norm of the geodesic correction term across different correction parameters \(b\), highlighting the impact of geodesic corrections during optimization.
    (d) Shows the trajectory of the parameters \(\theta_0\) and \(\theta_1\) throughout the optimization process, illustrating how each method navigates the parameter space. The QNGGC paths are more direct, emphasizing the benefits of incorporating geodesic corrections.
    }
    \label{fig:Ex1_E_plots}
\end{figure}

Figure \ref{fig:Ex1_E_plots} demonstrates the impact of geodesic corrections on optimization performance. The inclusion of geodesic corrections in QNGGC significantly enhances convergence speed (in terms of steps) compared to GD and QNG, as depicted in subplots (a) and (b), where QNGGC rapidly approaches the target energy \( E^* = -1 \). Subplot (c) shows the norm of the geodesic correction term, $\frac{b}{2} \Gamma^i_{lm}(\theta^{\mu}_t) 
\left( g^{lj}(\theta^{\mu}_t) \partial_j \mathcal{L}(\theta^{\mu}_t) \right)
\left( g^{mj}(\theta^{\mu}_t) \partial_j \mathcal{L}(\theta^{\mu}_t) \right)$, which plays a crucial role in guiding the optimization process, particularly in the initial iterations. The subsequent reduction of this norm towards zero indicates that the convergence of the gradient is being effectively managed, leading to more stable updates. Subplot (d) illustrates the optimization paths in the parameter space, where QNGGC provides a more direct and efficient trajectory compared to GD and QNG. This behavior underscores the advantage of leveraging curvature corrections, resulting in more precise and adaptive parameter updates.
\subsection{Example 2: Two-Qubit Simulation of the Hydrogen Molecule (\(\text{H}_2\))}

In this example, we focus on finding the ground state of the hydrogen molecule (\(\text{H}_2\)) using a two-qubit variational quantum eigensolver (VQE) approach. The Hamiltonian for the system is given by \cite{Yamamoto}:
\begin{equation}
H = \alpha (\sigma_z \otimes I + I \otimes \sigma_z) + \beta \sigma_x \otimes \sigma_x,
\end{equation}

where \(\alpha = 0.4\) and \(\beta = 0.2\). In this Hamiltonian, \(\sigma_z\) and \(\sigma_x\) are Pauli matrices that represent spin operators acting on individual qubits. The energy is expressed in Hartree atomic units, which is standard for molecular electronic structure simulations. This Hamiltonian has four eigenvalues:
\begin{equation}
h_1 = \sqrt{4\alpha^2 + \beta^2}, \quad h_2 = \beta, \quad h_3 = -\beta, \quad h_4 = -\sqrt{4\alpha^2 + \beta^2},
\end{equation}
with the ground state corresponding to the minimum eigenvalue \(h_4\). The corresponding eigenvector is the ground state, \(\ket{\psi_{\text{min}}}\), which is expressed as:
\begin{equation}
\ket{\psi_{\text{min}}} \propto -\beta \ket{00} + (2\alpha + \sqrt{4\alpha^2 + \beta^2}) \ket{11}.
\end{equation}

The ansatz for the VQE is designed to approximate this ground state. It is given as follows (refer to Figure \ref{fig_exp2} for a visual representation):
\begin{equation}
\begin{aligned}
|\psi\rangle = (CRY(\theta)_{q_0, q_1}) (CRX(\theta)_{q_0, q_1}) \left( R_y(2\theta_0) \otimes R_y(2\theta_1) \right) \ket{0} \otimes \ket{0}\\
\end{aligned}
\label{ansatz_ex2}
\end{equation}
where \( R_y(\theta) \) denotes the single-qubit rotation operator defined as \( R_y(\theta) = e^{-i\theta\sigma_y/2} \), and the entangling gates act between two qubits, \(q_0\) and \(q_1\), as \( CRX(\theta)_{q_0, q_1} = I \otimes |0\rangle\langle 0| + X \otimes |1\rangle\langle 1| \) and \( CRY(\theta)_{q_0, q_1} = I \otimes |0\rangle\langle 0| + R_y(\theta) \otimes |1\rangle\langle 1| \). Please refer to Appendix \ref{appendix:metric-christoffel} for details on the cost function and its gradient.

\begin{figure}[H]
\centering
\includegraphics[width=0.5\textwidth]{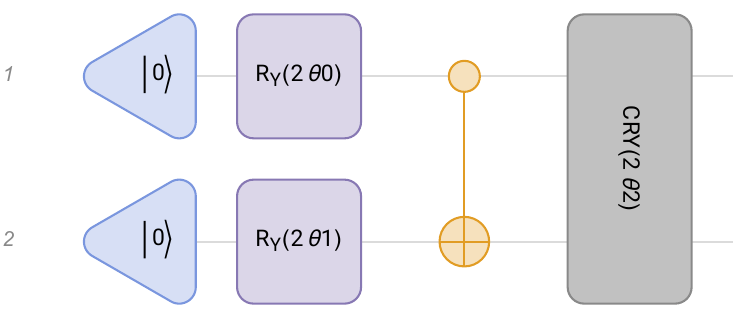}
\caption{An example of the ansatz described in \eqref{ansatz_ex2}.}
\label{fig_exp2}
\end{figure}

In this example, we consider a variational ansatz with three parameters \( \boldsymbol{\theta} = (\theta_0, \theta_1, \theta_2) \), resulting in a \( 3 \times 3 \) Fubini-Study metric \( F \), which is calculated as:

\begin{equation}
F = \begin{bmatrix}
1 & 0 & \cos(\theta_1) \sin(\theta_1) \\
0 & 1 & -\cos(\theta_0) \sin(\theta_0) \\
\cos(\theta_1) \sin(\theta_1) & -\cos(\theta_0) \sin(\theta_0) & \frac{1}{2} \left(1 - \cos(2\theta_0) \cos(2\theta_1)\right)
\end{bmatrix}
\end{equation}

To avoid singularities in \( F \) when updating the parameters in the VQE, we apply Tikhonov regularization by adding a small constant \(\lambda\) as a multiple of the identity matrix \( I \). The inverse Fubini-Study metric \( F^{-1} \) and the Christoffel symbols for this example are provided in Appendix \ref{appendix:metric-christoffel}.

\begin{figure}[H]
\centering
\includegraphics[width=\textwidth]{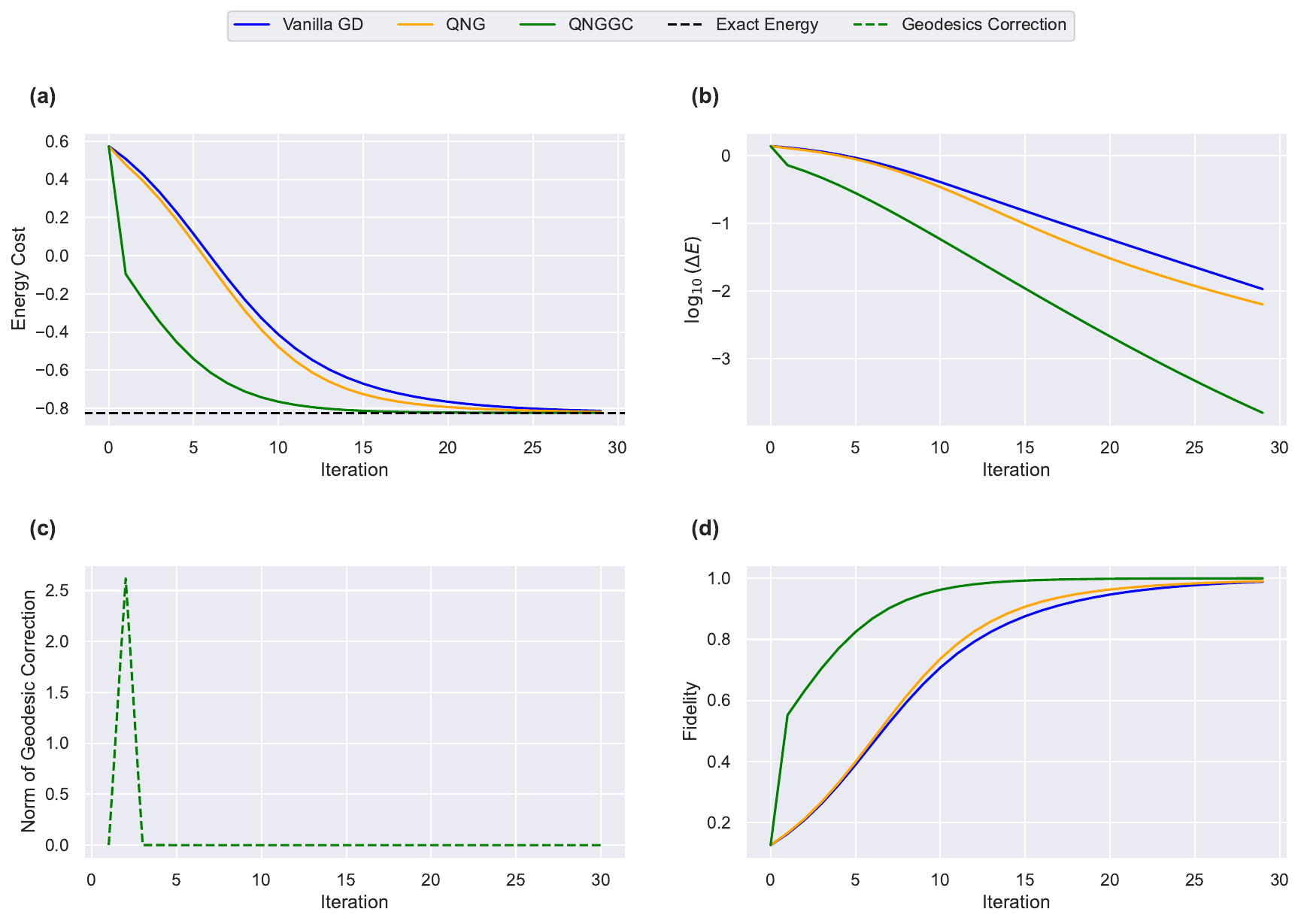}
\caption{
Comparison of the convergence of GD, QNG, and QNGGC. The initial parameters \((\theta_0, \theta_1, \theta_2)\) are set to \([-0.2, -0.2, 0]\), corresponding to the initial state defined by the ansatz in Eq.~\eqref{ansatz_ex2}. The learning rate is \(\eta = 0.05\) for all optimizers, and \(b = \eta^2\) for QNGGC.  
(a) Shows energy convergence.  
(b) Plots \(\log_{10}(\Delta E)\), where \(\Delta E\) is the energy difference from the target energy \(E^* \approx -0.82462\) Hartree.  
(c) Displays the norm of the geodesic correction term in QNGGC.  
(d) Illustrates the fidelity convergence rate. The optimization runs for 30 iterations with a regularization term \(\lambda = 10^{-6}\).
}
\label{fig:exm2_beta0.2_energy}
\end{figure}

Figure~\ref{fig:exm2_beta0.2_energy} demonstrates that QNGGC significantly outperforms GD and QNG, rapidly reducing the energy cost and achieving faster convergence to the target energy. The geodesic correction helps QNGGC navigate the parameter space more effectively, avoiding inefficient trajectories and reaching optimal fidelity faster than the other methods.

\begin{figure}[H]
\centering
\includegraphics[width=\textwidth]{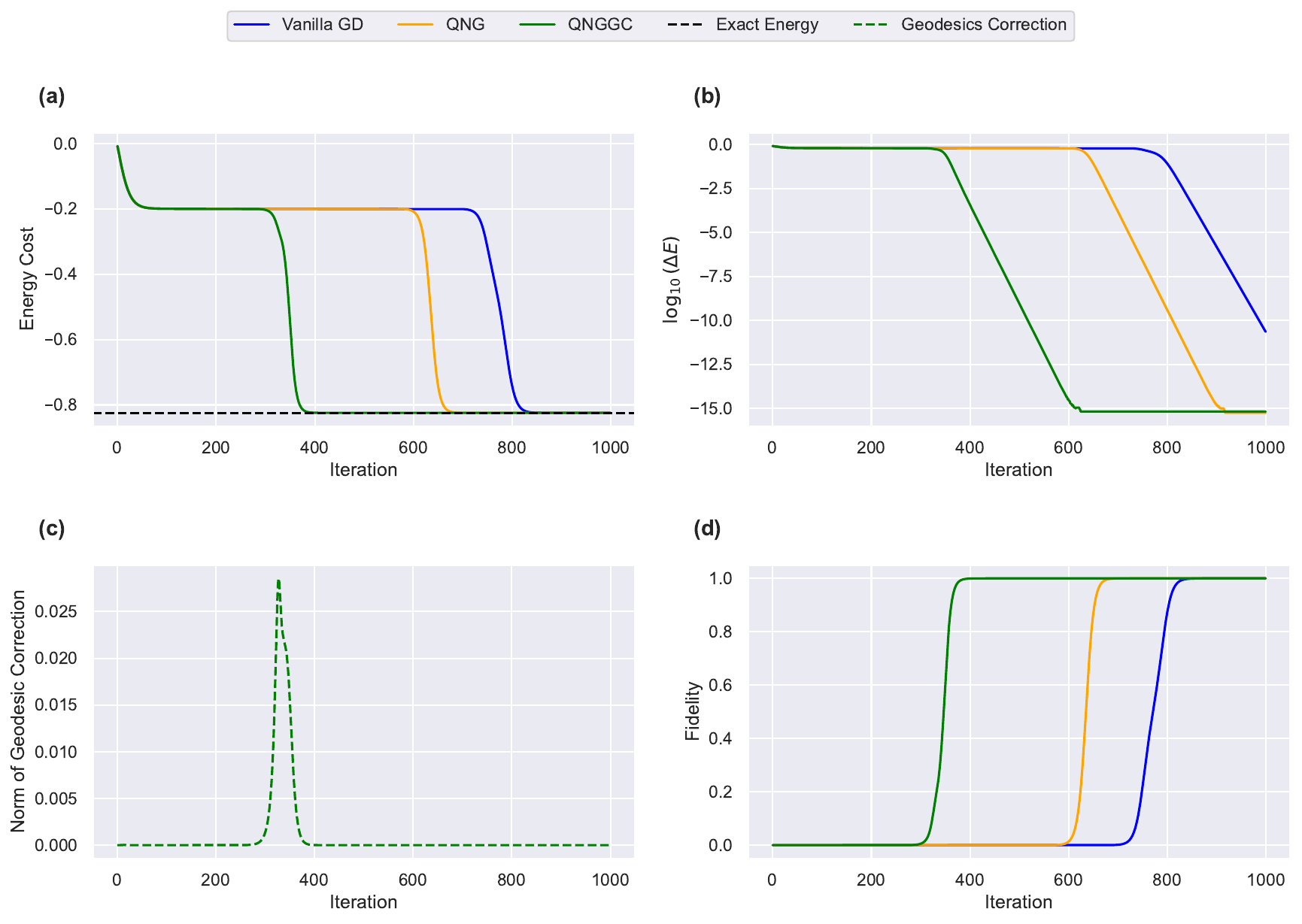}
\caption{
Same conditions as in Figure~\ref{fig:exm2_beta0.2_energy}, except the initial parameters \((\theta_0, \theta_1, \theta_2)\) are set to \([\pi/2, \pi/2, 0]\), and \(b = 0.4\). This configuration is more challenging due to the proximity of the initial state to a flat region of the cost landscape (plateau), which slows down optimization.
}
\label{fig:exm2_beta0.2_plateau}
\end{figure}

Figure~\ref{fig:exm2_beta0.2_plateau} highlights the performance of the optimizers when initialized near a plateau in the cost landscape. Despite the flat region, QNGGC efficiently escapes and converges faster to the ground state compared to GD and QNG.

Figure~\ref{fig:exm2_beta0.2_50} presents averaged results over 50 runs with random initializations, highlighting QNGGC’s robust performance across varied starting conditions.

\begin{figure}[H]
\centering
\includegraphics[width=\textwidth]{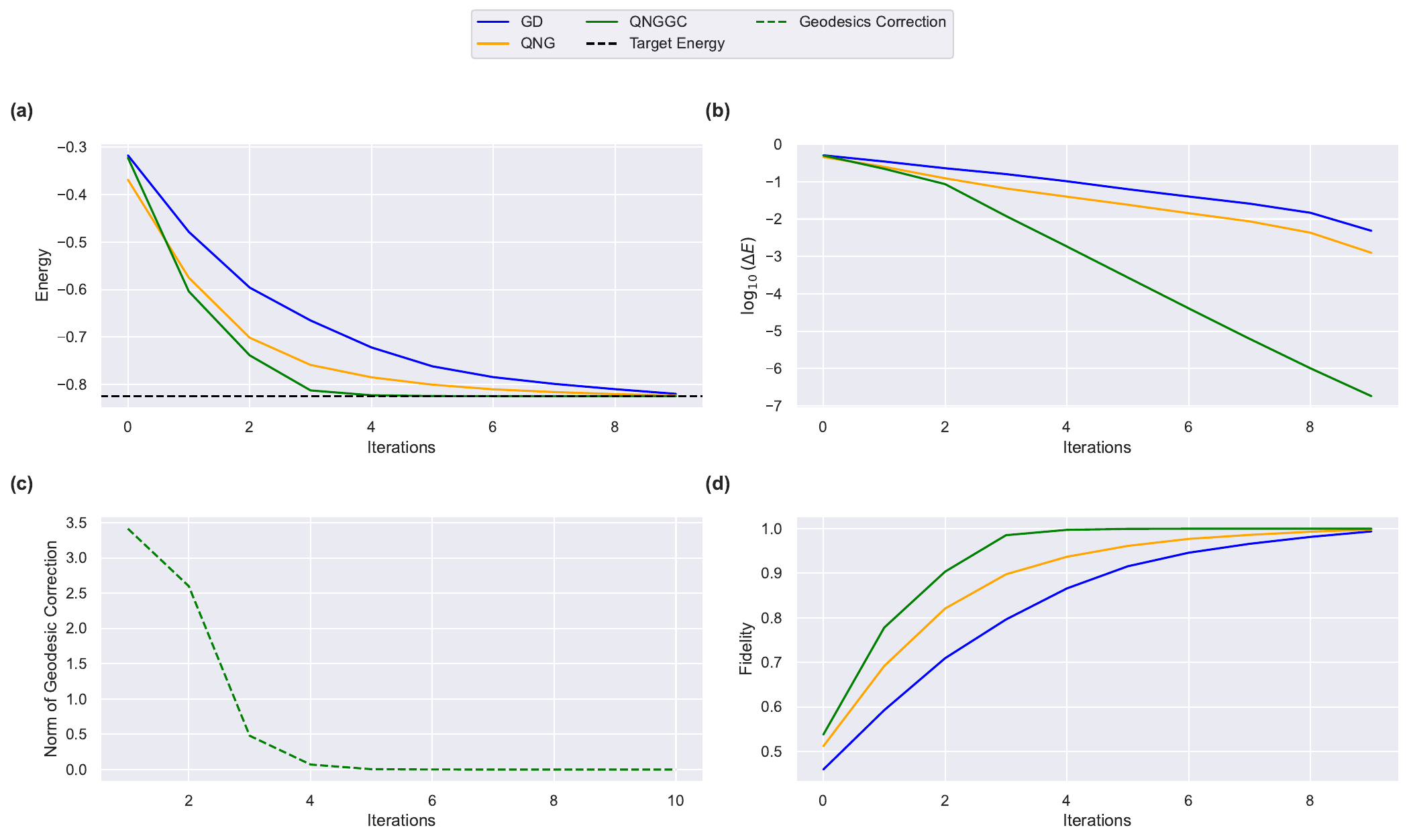}
\caption{
Convergence of GD, QNG, and QNGGC averaged over 50 runs with random initial points. Learning rates and \( b \) were tuned using grid search. The left plot shows energy convergence, while the right plot displays \(\log_{10}(\Delta E)\). All optimizers were run for 10 iterations, with \(\lambda\) set to \(10^{-2}\).
}
\label{fig:exm2_beta0.2_50}
\end{figure}

\subsection{Example 3: Transverse Field Ising Model}

To enhance our previous results, we extend our approach to simulations involving larger qubit systems, including 4 and 7 qubits, using the quantum software Qiskit \cite{Qiskit}. We employ the parameter-shift rule to compute both the gradient and Christoffel symbols. The ansatz used in VQE is the EfficientSU2 ansatz \cite{Kandala2017}, with one repetition as the circuit depth and linear entanglement for simplicity, as illustrated in Figure \ref{fig:EfficientSU2}. Our objective is to determine the ground state energy of the Transverse Field Ising Model \cite{BravoPrieto2020} under open boundary conditions:

\begin{align}
H = -\sum_{i} Z_{i} Z_{i+1} - h \sum_{i} X_{i},
\end{align}

\begin{figure}[H]
\centering
\includegraphics[width=0.5\textwidth]{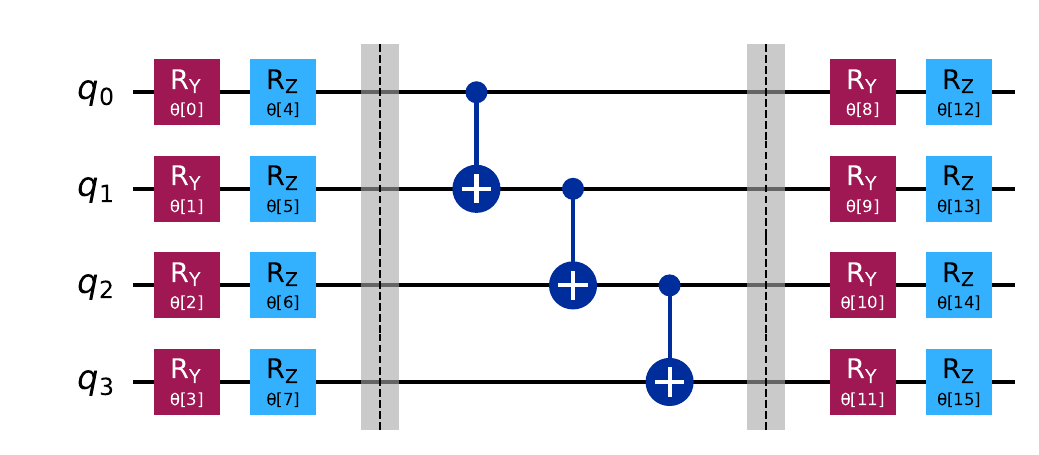}
\caption{EfficientSU2 ansatz for 4 qubits with one repetition circuit depth and linear entanglement.}
\label{fig:EfficientSU2}
\end{figure}

where \( Z_i \) and \( X_i \) are Pauli operators acting on the \(i\)-th qubit, with the transverse field strength denoted by \( h \) and fixed at \( h = 10 \) for our simulations. All simulations are performed in dimensionless units by setting the spin coupling strength to \( J = 1 \); hence, both the energy and transverse field \( h \) are expressed in units of \( J \). This model captures the interplay between nearest-neighbor spin interactions and a transverse magnetic field, driving the system between distinct quantum phases. For smaller values of the transverse field \( h \), the spins tend to align along the z-direction, corresponding to a ferromagnetic ordered phase. As \( h \) increases, the system becomes increasingly disordered and transitions into a paramagnetic phase. The critical point at \( h = 1 \) marks the quantum phase transition between these phases, where spontaneous symmetry breaking and critical quantum fluctuations occur.

To benchmark the effectiveness of our optimizers, we present in Fig. \ref{fig:exm3_4Q}, similar to previous examples, \(\log_{10}(|\langle H \rangle - E_0|)\) to evaluate how quickly and accurately each optimization method converges to the ground state energy. This provides insights into their efficiency and robustness. Parameter tuning was conducted using a grid search to identify the optimal settings for each optimizer.

\begin{figure}[H]
\centering
\includegraphics[width=\textwidth]{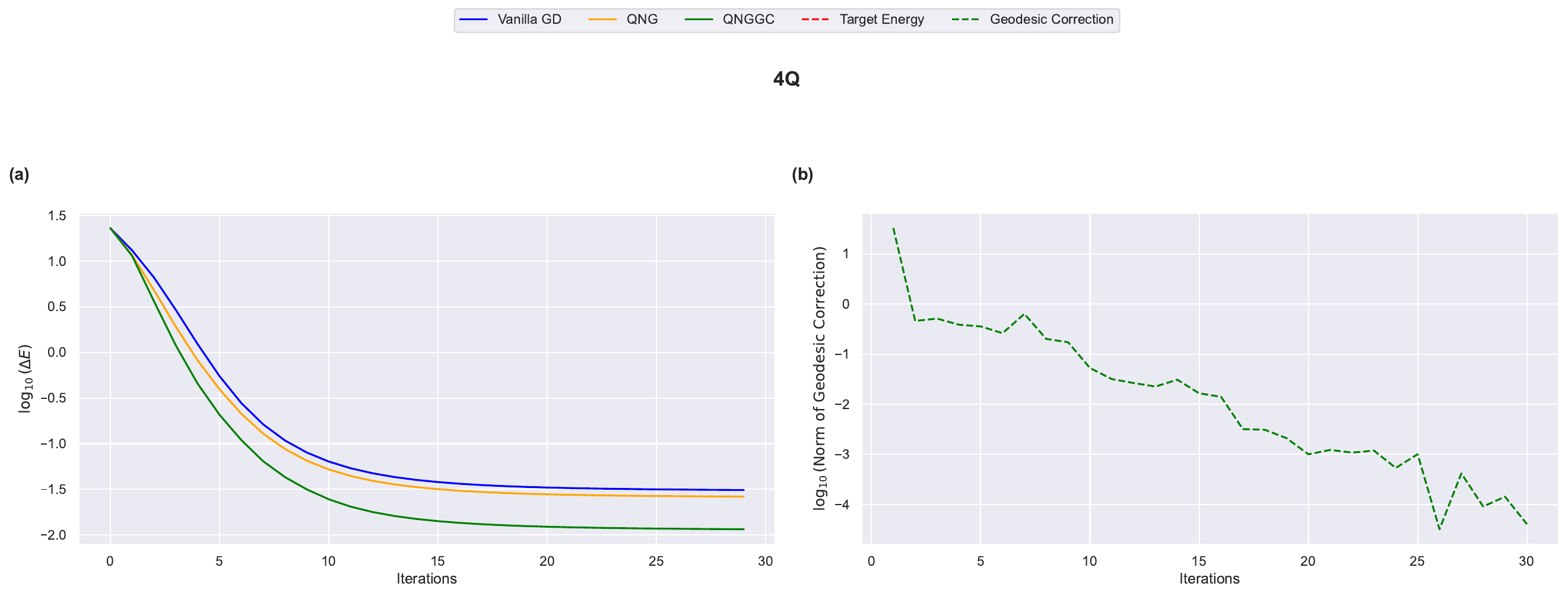}
\includegraphics[width=\textwidth]{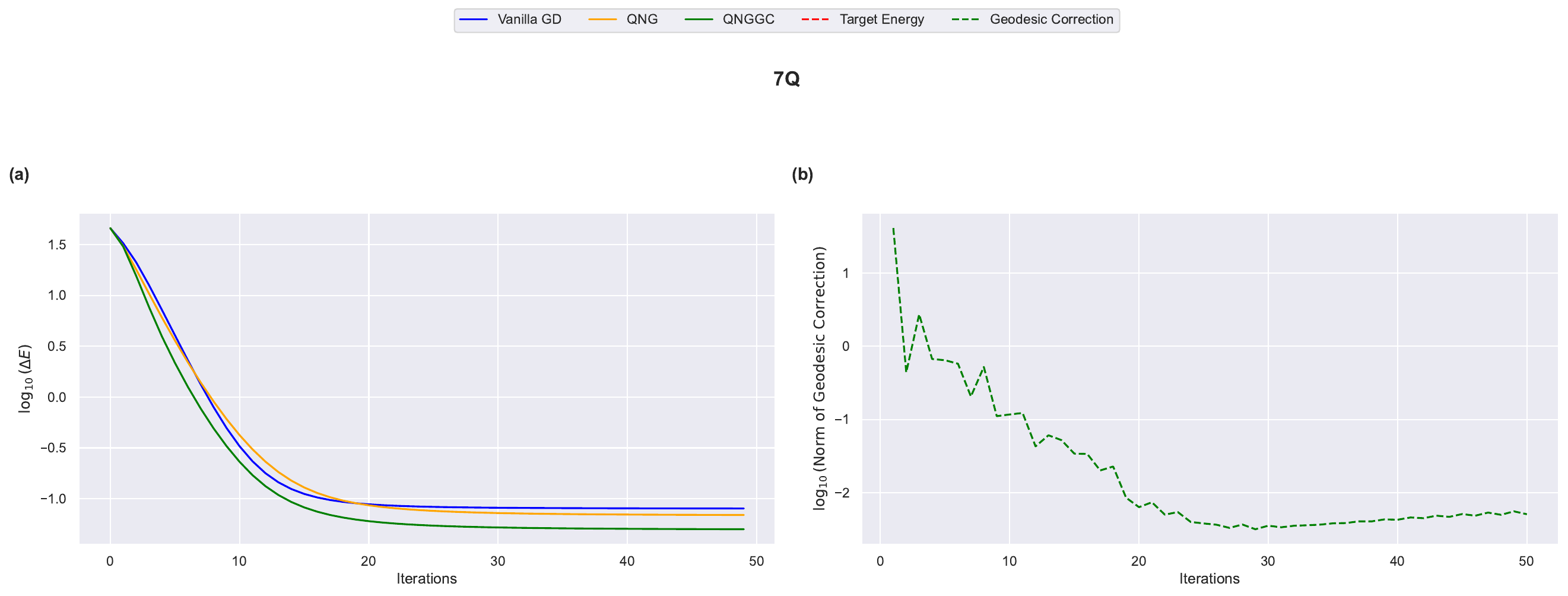}
\caption{
Convergence comparison of GD, QNG, and QNGGC for the Transverse Field Ising Model. The upper plot shows results for the 4-qubit system, and the lower plot for the 7-qubit system, averaged over 50 runs with random initializations. Subplot (a) plots \(\log_{10}(\Delta E)\), and (b) shows the norm of the geodesic correction term. Optimizers were run for 30 iterations for the 4-qubit system and 50 iterations for the 7-qubit system. The regularization was set to \(\lambda = 10^{-6}\).
}
\label{fig:exm3_4Q}
\end{figure}
\section{Conclusions and Outlook}\label{Conclusions}

In this manuscript, we extended the QNG method by incorporating higher-order integration techniques inspired by the Riemannian Euler update rule and a geometric perspective, leading to the development of an update rule based on geodesic equations. We introduced a tunable parameter \( b \) to heuristically adjust the effect of the geodesic correction, allowing for flexible integration of higher-order information. Through various examples, we demonstrated that QNGGC offers a more effective optimization strategy compared to traditional QNG by more closely following the direct optimization path. To compute the Christoffel symbols directly from quantum circuits, we proposed a method using the parameter-shift rule, enabling direct evaluation within quantum circuits.

While QNGGC offers improved convergence through geodesic corrections, it also introduces additional computational overhead. Gradient descent (GD) requires only \( O(l) \) circuit evaluations per iteration (with \( l \) denoting the number of variational parameters) and no metric-related overhead, but typically suffers from slow convergence. QNG improves convergence at a cost of \( O(l^2) \) circuits for evaluating the Fubini–Study metric. QNGGC further enhances convergence by incorporating curvature corrections via Christoffel symbols, increasing the cost to \( O(l^3) \). Although this complexity is manageable for small, shallow quantum circuits, it may become prohibitive in larger or noisier quantum circuits.

A promising direction for reducing this cost to a constant per iteration is the use of simultaneous perturbation stochastic approximation (SPSA)~\cite{Spall1992}, which is particularly well suited to such noisy, deep circuits, and has already demonstrated advantages for efficiently computing the Fubini–Study metric~\cite{Gacon}; a similar strategy could be applied to the computation of Christoffel symbols as well. In this work, we focused on small, shallow quantum circuits and the geometric structure underlying the rederivation of QNG from the Riemannian exponential map, as well as its extension to higher-order integrators that lead to QNGGC. We leave the exploration of deeper, noisy quantum circuits to future research. Once the computational cost is reduced through SPSA or other stochastic techniques, the framework could be extended further and find broader applications, such as time-dependent optimization methods inspired by differential geometry and the metric tensor formalism.

\section{Acknowledgements}

The author thanks Karl Jansen, Stefan Kühn, Yahui Chai, Yibin Guo, Tim Schwägerl, and Cenk Tüysüz from CQTA, DESY, as well as Tobias Hartung (Northeastern University, London) and Naoki Yamamoto (Keio University, Japan) for their helpful discussions and insightful feedback. The author also thanks Balázs Hetényi (Wigner Research Centre for Physics, Hungary) for his feedback. This work was supported by the Ministry of Science, Research, and Culture of the State of Brandenburg within the Centre for Quantum Technologies and Applications (CQTA) and funded by the German Ministry of Education and Research (BMBF) through the Project NiQ (Noise in Quantum Algorithms).

\begin{figure}[H]
\centering
\includegraphics[width=0.2\textwidth]{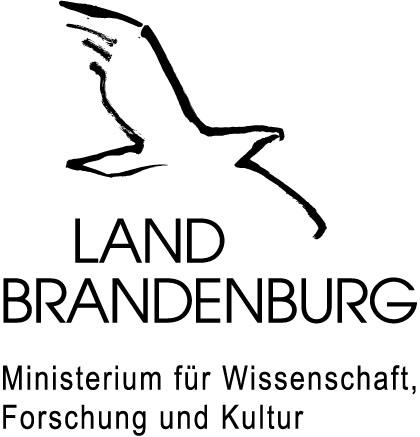}
\end{figure}


\appendix

\section{Parameter-Shift Rule and Metric Tensor Calculation}

This section provides a brief review of the parameter-shift rule and its application in computing both the gradient and the metric tensor for variational quantum algorithms. The parameter-shift rule allows for the direct estimation of gradients and higher-order derivatives on quantum hardware. For further details, we refer the reader to \cite{Mitarai, Schuld, Mari, Wierichs}.

\subsection{Parameter-Shift Rule for Gradient Estimation}

The parameter-shift rule provides a method to calculate the derivative of an expectation value with respect to a parameter \(\theta_i\) in a quantum circuit. Consider an observable \(\hat{O}\) and a parameterized quantum state \(|\psi(\bm{\theta})\rangle\); the expectation value of the observable is given by:
\begin{equation}
\langle \hat{O} \rangle(\bm{\theta}) = \langle \psi(\bm{\theta}) | \hat{O} | \psi(\bm{\theta}) \rangle.
\end{equation}

The derivative of this expectation value with respect to the parameter \(\theta_i\) can be computed using the parameter-shift rule as:

\begin{align}
\partial_{i} \langle \hat{O} \rangle(\bm{\theta}) &= \frac{1}{2} \left[\langle \hat{O} \rangle(\bm{\theta} + s \mathbf{e}_{i}) - \langle \hat{O} \rangle(\bm{\theta} - s \mathbf{e}_{i})\right],
\end{align}
where \(s\) is the shift in the parameter, typically \(s = \frac{\pi}{2}\), and \(\mathbf{e}_{i}\) is the unit vector in the direction of \(\theta_i\).

\subsection{Fubini-Study Metric}

To derive the Fubini-Study metric tensor \(g_{ij}(\bm{\theta})\), we begin by examining the infinitesimal distance between nearby quantum states in the parameter space. The metric tensor captures the local geometric structure of this space, reflecting how the quantum state \(|\psi(\bm{\theta})\rangle\) evolves under small displacements~\cite{Stokes2020, Meyer, Provost1980}.

We start by considering the normalization condition of the quantum state:
\begin{equation}
\langle \psi(\bm{\theta}) | \psi(\bm{\theta}) \rangle = 1.
\end{equation}

Differentiating this condition with respect to the parameters \(\theta^i\) gives:
\begin{equation}
\left\langle \psi(\bm{\theta})| \partial_i \psi(\bm{\theta}) \right\rangle 
+ \left\langle \partial_i \psi(\bm{\theta})| \psi(\bm{\theta}) \right\rangle = 0.
\label{eq:33}
\end{equation}

Differentiating Eq.~\eqref{eq:33} with respect to the parameters \(\theta^j\), we obtain:

\begin{equation}
\left\langle \psi(\bm{\theta})| \partial_i \partial_j \psi(\bm{\theta}) \right\rangle 
+ \left\langle \partial_i \partial_j \psi(\bm{\theta})| \psi(\bm{\theta}) \right\rangle 
+ \left\langle \partial_i \psi(\bm{\theta})| \partial_j \psi(\bm{\theta}) \right\rangle 
+ \left\langle \partial_j \psi(\bm{\theta})| \partial_i \psi(\bm{\theta}) \right\rangle = 0.
\label{eq:34}
\end{equation}

Next, consider the quantum state \(|\psi(\bm{\theta} + \boldsymbol{\delta\theta})\rangle\) near \(|\psi(\bm{\theta})\rangle\), where \(\boldsymbol{\delta\theta}\) is the displacement vector. Using a Taylor expansion, we express the shifted state as:
\begin{equation}
|\psi(\bm{\theta} + \delta\theta)\rangle = |\psi(\bm{\theta})\rangle + \partial_i |\psi(\bm{\theta})\rangle \delta\theta^i + \frac{1}{2} \partial_i \partial_j |\psi(\bm{\theta})\rangle \delta\theta^i \delta\theta^j + \mathcal{O}(\delta\theta^3).
\end{equation}

Taking the inner product of \(|\psi(\bm{\theta})\rangle\) with this expanded state, we get:
\begin{equation}
\langle \psi(\bm{\theta}) | \psi(\bm{\theta} + \bm{\delta\theta}) \rangle = 1 + \langle \psi(\bm{\theta}) | \partial_i \psi(\bm{\theta}) \rangle \delta\theta^i + \frac{1}{2} \langle \psi(\bm{\theta}) | \partial_i \partial_j \psi(\bm{\theta}) \rangle \delta\theta^i \delta\theta^j + \mathcal{O}(\delta\theta^3).
\end{equation}

The fidelity between the states \(|\psi(\bm{\theta})\rangle\) and \(|\psi(\bm{\theta} + \delta\theta)\rangle\) is given to quadratic order in $\bm{\delta \theta}$ by:
\begin{align}
|\langle \psi(\bm{\theta})| \psi(\bm{\theta} + \bm{\delta\theta}) \rangle|^2 
&= \langle \psi(\bm{\theta})| \psi(\bm{\theta} + \bm{\delta\theta}) \rangle \langle \psi(\bm{\theta} + \bm{\delta\theta})| \psi(\bm{\theta}) \rangle \nonumber \\
&= 1 + \left[ \left\langle \psi(\bm{\theta})| \partial_i \psi(\bm{\theta}) \right\rangle + \left\langle \partial_i \psi(\bm{\theta})| \psi(\bm{\theta}) \right\rangle \right] \delta\theta^i \nonumber \\
&\quad + \left[ \left\langle \partial_i \psi(\bm{\theta})|  \psi(\bm{\theta}) \right\rangle  \left\langle \psi(\bm{\theta}) | \partial_j \psi(\bm{\theta}) \right\rangle \right] \delta\theta^i \delta\theta^j \nonumber \\
&\quad + \frac{1}{2} \left[ \left\langle \psi(\bm{\theta}) | \partial_i \partial_j \psi(\bm{\theta}) \right\rangle + \left\langle \partial_i \partial_j \psi(\bm{\theta}) | \psi(\bm{\theta}) \right\rangle \right] \delta\theta^i \delta\theta^j \nonumber + \cdots \\
\end{align}
Using Eqs.~\eqref{eq:33} and \eqref{eq:34}, we get
\begin{align}
|\langle \psi(\bm{\theta})| \psi(\bm{\theta} + \bm{\delta\theta}) \rangle|^2
&= 1 + \bigg[ 
\left\langle \partial_i \psi(\bm{\theta})| \psi(\bm{\theta}) \right\rangle 
\left\langle \psi(\bm{\theta}) | \partial_j \psi(\bm{\theta}) \right\rangle \nonumber \\
&\quad 
- \frac{1}{2} \left( 
\left\langle \partial_i \psi(\bm{\theta})| \partial_j \psi(\bm{\theta}) \right\rangle 
+ \left\langle \partial_j \psi( \bm{\theta})| \partial_i \psi(\bm{\theta}) \right\rangle 
\right) 
\bigg] \delta\theta^i \delta\theta^j + \cdots
\end{align}

Using the approximation for the Fubini-Study distance between quantum states,
\begin{equation}
d^2(P_\psi, P_\phi) = \arccos^2(|\langle \psi | \phi \rangle|) \approx 1 - |\langle \psi | \phi \rangle|^2 + \mathcal{O}((1 - |\langle \psi | \phi \rangle|^2)^2),
\end{equation}
and the fidelity expansion up to second order, it follows that the infinitesimal squared distance is given by
\begin{align}
d^2(P_{\psi(\bm{\theta})}, P_{\psi(\bm{\theta} + \bm{\delta\theta})}) 
&= \left[
\frac{1}{2} \left( 
\left\langle \partial_i \psi(\bm{\theta})| \partial_j \psi(\bm{\theta}) \right\rangle 
+ \left\langle \partial_j \psi(\bm{\theta})| \partial_i \psi(\bm{\theta}) \right\rangle 
\right) \right. \nonumber \\
&\quad \left. - 
\left\langle \partial_i \psi(\bm{\theta})| \psi(\bm{\theta}) \right\rangle 
\left\langle \psi(\bm{\theta})| \partial_j \psi(\bm{\theta}) \right\rangle 
\right] d \theta^i d \theta^j. \label{eq:qgt_expanded}
\end{align}

The first term in \eqref{eq:qgt_expanded} is manifestly real. The second term is also real because from the normalization condition (as shown in Eq.~\eqref{eq:33}), we have
\begin{equation}
\text{Re}\left[ \left\langle \psi(\bm{\theta})| \partial_i \psi(\bm{\theta}) \right\rangle \right] = 0.
\end{equation}

Therefore, the infinitesimal squared distance is given by the real part of the quantum geometric tensor:
\begin{align}
d^2(P_{\psi(\bm{\theta})}, P_{\psi(\bm{\theta} + \bm{\delta\theta})}) 
&= \text{Re} \left[
\left\langle \partial_i \psi(\bm{\theta})| \partial_j \psi(\bm{\theta}) \right\rangle 
- \left\langle \partial_i \psi(\bm{\theta})| \psi(\bm{\theta}) \right\rangle 
\left\langle \psi(\bm{\theta})| \partial_j \psi(\bm{\theta}) \right\rangle 
\right] d \theta^i d \theta^j. \label{eq:fs_distance}
\end{align}

This directly defines the Fubini-Study metric as:
\begin{equation}
g_{ij}(\bm{\theta}) = \text{Re} \left[
\left\langle \partial_i \psi(\bm{\theta}) | \partial_j \psi(\bm{\theta}) \right\rangle 
- \left\langle \partial_i \psi(\bm{\theta}) | \psi(\bm{\theta}) \right\rangle 
\left\langle \psi(\bm{\theta}) | \partial_j \psi(\bm{\theta}) \right\rangle 
\right]. \label{eq:fs_metric}
\end{equation}

Thus, the Fubini-Study metric tensor \(g_{ij}(\bm{\theta})\) quantifies the infinitesimal squared distance between nearby quantum states in the parameter space, providing a geometric interpretation of the state evolution. To express this metric tensor in terms of measurable quantities, we employ the parameter-shift rule, which enables direct computation on quantum hardware. The resulting expressions for the metric tensor components are given in~\eqref{full_metric} for the full form and~\eqref{diag_metric} for the diagonal approximation.

\subsection{Proof of the Proposition}\label{app:Christoffel}

Using the higher-order derivatives strategy with the parameter-shift rule, the differentiation of the metric \eqref{diag_metric} with respect to \(\theta_j\) is given by:

\begin{align}
\partial_k g_{j,j}(\boldsymbol{\theta}) = \frac{1}{8} \Bigg[
- \left| \langle \psi(\boldsymbol{\theta}) | \psi(\boldsymbol{\theta} + \pi \mathbf{e}_j + \frac{\pi}{2} \mathbf{e}_k) \rangle \right|^2 
+ \left| \langle \psi(\boldsymbol{\theta}) | \psi(\boldsymbol{\theta} + \pi \mathbf{e}_j - \frac{\pi}{2} \mathbf{e}_k) \rangle \right|^2
\Bigg].
\label{diag_derv}
\end{align}

Using the definition of the Christoffel symbols \eqref{DefChristoffel}, for a diagonal metric tensor, the Christoffel symbols are determined by the following specific cases:

1. For \( i = j = k \):
   \[
   \Gamma^i_{ii} = \frac{1}{2 g_{ii}} \frac{\partial g_{ii}}{\partial \theta^i},
   \]

2. For \( i = j \neq k \):
   \[
   \Gamma^i_{ik} = \frac{1}{2 g_{ii}} \frac{\partial g_{ii}}{\partial \theta^k},
   \]

3. For \( i = k \neq j \):
   \[
   \Gamma^i_{ij} = \frac{1}{2 g_{ii}} \frac{\partial g_{ii}}{\partial \theta^j},
   \]

4. For \( j = k \neq i \):
   \[
   \Gamma^i_{jj} = -\frac{1}{2 g_{ii}} \frac{\partial g_{jj}}{\partial \theta^i},
   \]
   
5. For \( i \neq j \neq k \):
   \[
   \Gamma^i_{jk} = 0.
   \]
   
The case \( i = j = k \) is ignored as shifting the same parameter twice disrupts distinct mixed derivative calculations.

Using the Kronecker delta notation, the expression can be summarized as:

\begin{align}
\Gamma^i_{jk} = \frac{1}{2 g_{ii}} \left( \delta_{ij} \frac{\partial g_{ii}}{\partial \theta^k} + \delta_{ik} \frac{\partial g_{ii}}{\partial \theta^j} - \delta_{jk} \frac{\partial g_{jj}}{\partial \theta^i} \right),
\label{kroneker_proof}
\end{align}
where \( \delta_{ij} \) is the Kronecker delta, defined as:
\[
\delta_{ij} = \begin{cases}
1 & \text{if } i = j, \\
0 & \text{if } i \neq j.
\end{cases}
\]

By substituting equation \eqref{diag_derv} into equation \eqref{kroneker_proof}, we obtain the final expression for the Christoffel symbols for the diagonal metric \eqref{Christoffel_parshiff}. 
\(\square\)


\section{Detail calculation for Example 2}
\label{appendix:metric-christoffel}

In this appendix, we provide the detailed expressions for the inverse Fubini-Study metric and the Christoffel symbols for Example 2, which simulates the Hydrogen molecule (\(\text{H}_2\)).

\subsection{Inverse Fubini-Study Metric}
The inverse Fubini-Study metric \( F^{-1} \), after applying Tikhonov regularization, is given by:

\[
\scalebox{0.67}{$
\left(
\begin{array}{ccc}
\frac{3 + 4\lambda (3 + 2\lambda) + \cos(4\theta_0) - 4(1 + \lambda)\cos(2\theta_0)\cos(2\theta_1)}{(1 + \lambda)(2 + 4\lambda (3 + 2\lambda) + \cos(4\theta_0) - 4(1 + \lambda)\cos(2\theta_0)\cos(2\theta_1) + \cos(4\theta_1))} & 
- \frac{8\cos(\theta_0)\cos(\theta_1)\sin(\theta_0)\sin(\theta_1)}{(1 + \lambda)(2 + 4\lambda (3 + 2\lambda) + \cos(4\theta_0) - 4(1 + \lambda)\cos(2\theta_0)\cos(2\theta_1) + \cos(4\theta_1))} & 
-\frac{4\sin(2\theta_1)}{2 + 4\lambda (3 + 2\lambda) + \cos(4\theta_0) - 4(1 + \lambda)\cos(2\theta_0)\cos(2\theta_1) + \cos(4\theta_1)} \\[10pt]

-\frac{8\cos(\theta_0)\cos(\theta_1)\sin(\theta_0)\sin(\theta_1)}{(1 + \lambda)(2 + 4\lambda (3 + 2\lambda) + \cos(4\theta_0) - 4(1 + \lambda)\cos(2\theta_0)\cos(2\theta_1) + \cos(4\theta_1))} & 
\frac{1 + \frac{2\sin(2\theta_0)^2}{2 + 4\lambda (3 + 2\lambda) + \cos(4\theta_0) - 4(1 + \lambda)\cos(2\theta_0)\cos(2\theta_1) + \cos(4\theta_1)}}{1 + \lambda} & 
\frac{8\cos(\theta_0)\sin(\theta_0)}{2 + 4\lambda (3 + 2\lambda) + \cos(4\theta_0) - 4(1 + \lambda)\cos(2\theta_0)\cos(2\theta_1) + \cos(4\theta_1)} \\[10pt]

-\frac{4\sin(2\theta_1)}{2 + 4\lambda (3 + 2\lambda) + \cos(4\theta_0) - 4(1 + \lambda)\cos(2\theta_0)\cos(2\theta_1) + \cos(4\theta_1)} & 
-\frac{8\cos(\theta_0)\sin(\theta_0)}{2 + 4\lambda (3 + 2\lambda) + \cos(4\theta_0) - 4(1 + \lambda)\cos(2\theta_0)\cos(2\theta_1) + \cos(4\theta_1)} & 
\frac{8(1 + \lambda)}{2 + 4\lambda (3 + 2\lambda) + \cos(4\theta_0) - 4(1 + \lambda)\cos(2\theta_0)\cos(2\theta_1) + \cos(4\theta_1)}
\end{array}
\right)
$}
\]

\subsection{Christoffel Symbols}

The Christoffel symbols $\Gamma^{i}_{jk}$ for this example are given by:

\begin{figure}[H]
\centering
\includegraphics[width=1.1\textwidth]{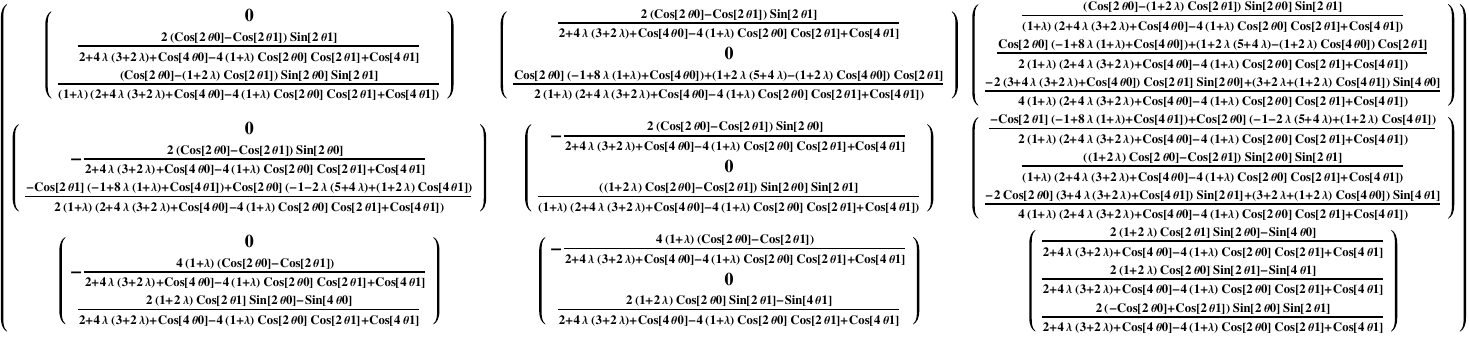}
\end{figure}

where the Christoffel symbols $\Gamma^i_{jk}$ were generated using \textit{Mathematica}~\cite{Mathematica} and are presented here exactly as printed by the software, in compact matrix form:
\[
\left(
\begin{array}{ccc}
\begin{pmatrix}
\Gamma^0_{00} \\
\Gamma^0_{10} \\
\Gamma^0_{20}
\end{pmatrix}
&
\begin{pmatrix}
\Gamma^0_{01} \\
\Gamma^0_{11} \\
\Gamma^0_{21}
\end{pmatrix}
&
\begin{pmatrix}
\Gamma^0_{02} \\
\Gamma^0_{12} \\
\Gamma^0_{22}
\end{pmatrix}
\\[2.5em]
\begin{pmatrix}
\Gamma^1_{00} \\
\Gamma^1_{10} \\
\Gamma^1_{20}
\end{pmatrix}
&
\begin{pmatrix}
\Gamma^1_{01} \\
\Gamma^1_{11} \\
\Gamma^1_{21}
\end{pmatrix}
&
\begin{pmatrix}
\Gamma^1_{02} \\
\Gamma^1_{12} \\
\Gamma^1_{22}
\end{pmatrix}
\\[2.5em]
\begin{pmatrix}
\Gamma^2_{00} \\
\Gamma^2_{10} \\
\Gamma^2_{20}
\end{pmatrix}
&
\begin{pmatrix}
\Gamma^2_{01} \\
\Gamma^2_{11} \\
\Gamma^2_{21}
\end{pmatrix}
&
\begin{pmatrix}
\Gamma^2_{02} \\
\Gamma^2_{12} \\
\Gamma^2_{22}
\end{pmatrix}
\end{array}
\right)
\]

\noindent Mathematica formats functions symbolically, such as \texttt{Cos[$\theta a$]} for $\cos(\theta_a)$ and \texttt{Sin[$\theta a$]} for $\sin(\theta_a)$, where $a = 0, 1$. We retain this notation to preserve the symbolic structure and ensure consistency with the software's output.

\subsection{Ansatz, Cost Function, and Gradient}

The ansatz given in equation \eqref{ansatz_ex2} can be expanded as follows:

\begin{equation}
\begin{aligned}
|\psi\rangle 
&= \left(\cos(\theta_0) \cos(\theta_1)\right)\ket{00} \\
&\quad + \left(\cos(\theta_0) \cos(\theta_2) \sin(\theta_1) - \cos(\theta_1) \sin(\theta_0) \sin(\theta_2)\right)\ket{01} \\
&\quad + \left(\sin(\theta_0) \sin(\theta_1)\right)\ket{10} \\
&\quad + \left(\cos(\theta_1) \cos(\theta_2) \sin(\theta_0) + \cos(\theta_0) \sin(\theta_1) \sin(\theta_2)\right)\ket{11}.
\end{aligned}
\end{equation} 

Using this ansatz and the hydrogen, $H_2$, Hamiltonian, we derive the following cost function:

\begin{equation}
\begin{aligned}
\mathcal{L}(\theta_0, \theta_1, \theta_2, \theta_3) &= 2 \cos^2(\theta_0) \left(-\alpha \sin^2(\theta_1) \sin^2(\theta_2) + \alpha \cos^2(\theta_1) + \beta \sin(\theta_1) \cos(\theta_1) \sin(\theta_2)\right) \\
&\quad - 2 \sin^2(\theta_0) \cos(\theta_1) \left(\alpha \cos(\theta_1) \cos^2(\theta_2) + \beta \sin(\theta_1) \sin(\theta_2)\right) \\
&\quad + 2 \sin(\theta_0) \cos(\theta_0) \cos(\theta_2) (\beta - 2 \alpha \sin(\theta_1) \cos(\theta_1) \sin(\theta_2))
\end{aligned}
\end{equation}

The gradient of the cost function with respect to the parameters\( \theta_0 \), \( \theta_1 \), and \( \theta_2 \) is expressed as:

\begin{equation}
\begin{aligned}
\frac{\partial \mathcal{L}}{\partial \theta_0} &= -\sin(2\theta_0) \left(\alpha + 2\alpha \cos(2\theta_1) + \alpha \cos(2\theta_2) + 2\beta \sin(2\theta_1) \sin(\theta_2)\right) \\
&\quad + \cos(2\theta_0) \left(2\beta \cos(\theta_2) - \alpha \sin(2\theta_1) \sin(2\theta_2)\right), \\
\frac{\partial \mathcal{L}}{\partial \theta_1} &= \cos(2\theta_0) \left(-2\alpha \sin(2\theta_1) + 2\beta \cos(2\theta_1) \sin(\theta_2)\right) \\
&\quad - \alpha \left(2 \sin(2\theta_1) \sin^2(\theta_2) + \cos(2\theta_1) \sin(2\theta_0) \sin(2\theta_2)\right), \\
\frac{\partial \mathcal{L}}{\partial \theta_2} &= \beta \cos(2\theta_0) \cos(\theta_2) \sin(2\theta_1) \\
&\quad - \sin(2\theta_0) \left(\alpha \cos(2\theta_2) \sin(2\theta_1) + \beta \sin(\theta_2)\right) + \alpha \left(-\cos(2\theta_0) + \cos(2\theta_1)\right) \sin(2\theta_2).
\end{aligned}
\end{equation}


\end{document}